\newcolumntype{d}{D{.}{.}{-1}}
\pgfplotsset{compat=newest}
\newenvironment{proof}{\paragraph*{Proof:}}{\hfill$\square$}
\newcommand\numberthis{\addtocounter{equation}{1}\tag{\theequation}}
\newcommand{\ud}{\mathrm{d}}
\newcommand{\f}{{\mathsf{f}}}
\newcommand{\sa}{{\mathsf{s}}}
\newcommand{\cA}{\mathcal{A}}
\newcommand{\cJ}{\mathcal{J}}
\newcommand{\cL}{\mathcal{L}}
\newcommand{\cN}{\mathcal{N}}
\newcommand{\cR}{\mathfrak{R}}
\newcommand{\cS}{\mathcal{S}}
\newcommand{\cT}{\mathcal{T}}
\newcommand{\cU}{\mathcal{U}}
\newtheorem{problem}{Problem}
\newtheorem{theorem}{Theorem}
\newcommand{\norm}[1]{\left\lVert#1\right\rVert}
\begin{document}
\begin{frontmatter}

\title{Greedy Decentralized Auction-based Task Allocation for Multi-Agent Systems \thanksref{footnoteinfo}}

\thanks[footnoteinfo]{This work was supported in part by a Fellowship of the Belgian American Educational Foundation.}

\author[First]{Martin Braquet} 
\author[First]{Efstathios Bakolas} 

\address[First]{Department of Aerospace Engineering and Engineering Mechanics, The University of Texas at Austin, Austin, Texas 78712-1221, USA (e-mail: braquet@utexas.edu, bakolas@austin.utexas.edu).}

\begin{abstract}
We propose a decentralized auction-based algorithm for the solution of dynamic task allocation problems for spatially distributed multi-agent systems. In our approach, each member of the multi-agent team is assigned to at most one task from a set of spatially distributed tasks, while several agents can be allocated to the same task. The task assignment is dynamic since it is updated at discrete time stages (iterations) to account for the current states of the agents as the latter move towards the tasks assigned to them at the previous stage. Our proposed methods can find applications in problems of resource allocation by intelligent machines such as the delivery of packages by a fleet of unmanned or semi-autonomous aerial vehicles. In our approach, the task allocation accounts for both the cost incurred by the agents for the completion of their assigned tasks (e.g., energy or fuel consumption) and the rewards earned for their completion (which may reflect, for instance, the agents' satisfaction). We propose a Greedy Coalition Auction Algorithm (GCAA) in which the agents possess bid vectors representing their best evaluations of the task utilities. The agents propose bids, deduce an allocation based on their bid vectors and update them after each iteration. The solution estimate of the proposed task allocation algorithm converges after a finite number of iterations which cannot exceed the number of agents. Finally, we use numerical simulations to illustrate the effectiveness of the proposed task allocation algorithm (in terms of performance and computation time) in several scenarios involving multiple agents and tasks distributed over a spatial 2D domain.
\end{abstract}

\begin{keyword}
Multi-agent and Networked Systems, Auction Algorithms, Decentralized Systems, Trajectory Planning.
\end{keyword}

\end{frontmatter}

\maketitle
\thispagestyle{plain}
\pagestyle{plain}

\section{Introduction}\label{s:intro}

We present a decentralized auction-based algorithm to address dynamic task allocation problems for multi-agent systems. In our problem, the agents have to complete a set of tasks which are distributed over a given spatial domain. We propose a decentralized solution for the computation of task assignment profiles based on auction-based negotiations between the agents.
Our proposed methods can find applications in problems in which agents (e.g., autonomous vehicles, humans, robots, intelligent machines, etc.) have to share resources and distribute the workload among them in order to accomplish one or more tasks. 
Disaster response by a fleet of unmanned aerial vehicles (UAV) which have to assess the severity of the situation and discover where help is needed more as well as the delivery of packages by autonomous or semi-autonomous ground or aerial robots are two characteristic examples.

\textit{Literature review:} There are several types of task allocation problems for multi-agent systems depending on the ability of each agent to handle multiple tasks (involving task scheduling) and on whether it is possible to have multiple agents assigned to the same task (thus, allowing for the formation of coalition of agents). These problems can be addressed by auction-based techniques, distributed and / or multi-objective optimization, game-theoretic methods and machine-learning algorithms, to name but a few.

An important consideration when developing algorithms for multi-agent task allocation is the ability of these algorithms to be deployed in systems where there is no single entity that allocates tasks and workload among the agents. In this regard, centralized methods rely on a single point of operation in the sense that the agents negotiate with each other under the direction of a central entity (\cite{p:gerkey2002}). 
Decentralized methods avoid this single point of failure by allowing each agent to consult directly with the other agents and compute their own task assignments. Decentralized execution, however, adds significant computation time (\cite{p:howauction2009,p:NANJANATH2010,p:capitan2013}). 

Auction-based approaches are derived from market economy principles in which each agent tries to maximize his own profit, based on the total reward that will then be redistributed among them. These methods are receiving an increased amount of attention (e.g., satellites in \cite{p:phillips2021}, drones in \cite{p:hayat2020}) mainly because of certain key benefits such as the worst-case global utility that can be derived theoretically by using them (\cite{p:guannan2019}), their fast convergence, low complexity and high computational efficiency (\cite{p:keum2019,p:shin2019}). Auction-based methods have also been boosted by recent breakthroughs in reinforcement learning (\cite{p:rahili2020}).
The consensus-based bundle algorithm (\cite{p:howauction2009}) (CBBA) utilizes a market-based decision strategy as the mechanism for decentralized task selection and uses a consensus routine based on local communication as a conflict resolution mechanism to achieve agreement on the winning bid values.
Finally, other decentralized auction algorithms based on local communication have been developed to allow the agents to bid on a task asynchronously (\cite{p:johnson2011}).

One of the simplest approaches to solve decentralized auction-based problems is via greedy algorithms, which consider the optimal (in a myopic sense) choice that maximizes a global objective (\cite{p:luo2012}).
Some approaches handle heterogeneous agents (with different traits / capabilities) by computing their utilities based on their own local information, and the task allocation is solely determined by their bids (\cite{p:ravichandar2019}).
In this regard, such algorithms can calculate the agents' utilities based on their resource levels and the possibility of visiting refill stations (\cite{p:lee2018}).
Auction-based techniques have been proven to efficiently produce suboptimal solutions (\cite{p:gerkey2004}) with a guaranteed convergence to a conflict-free assignment. Other advantages of auctions are their high scalability and robustness to variations in the communication network topology (\cite{p:whitbrook2019, p:otte2020}).

Other types of task allocation methods include those which are based on game theory and in particular, potential games for the computation of mutually agreeable task assignments. Although the negotiation protocols are proven to converge to mutually agreeable tasks (\cite{p:shamma2007}), their convergence is only guaranteed for the case in which the game remains the same (task utilities are constant) which is not the case in a dynamic task allocation problem. 
Other algorithms aim to compute a mutually agreeable profile corresponding to a Nash equilibrium (game-theoretic formulation of task allocation problems) for all agents (\cite{p:bakolas2020}). Game theory is an important tool to extend task allocation problems to multiple agents but finding \textit{efficient} Nash equilibria (task assignment profiles giving high global utility) is not guaranteed (solutions based on individual rationality may not automatically lead to high global utility) and computational cost can be significant.
Likewise, constrained optimization approaches based on nonlinear programming tools require in general significant computational power and time. More efficient optimization-based task allocation methods that rely on tools from quadratic programming have been introduced in (\cite{p:bakshi2019quad}).
Recently, machine-learning algorithms have started to receive a significant amount of attention mainly because they can process a lot of information (by utilizing, for instance, neural networks) and handle unknown environments via reinforcement learning (especially Deep Q-Learning~\cite{p:gautier2020}). Recurrent neural networks also find applications in scheduling problems for clustered tasks in Multi-Task Robots Single-Robot Tasks Assignment problems (\cite{p:bakshi2019}).

\textit{Contributions:} In this paper, we propose a dynamic auction-based task allocation algorithm. In our approach, the task utilities depend on both the rewards earned by the agents for accomplishing their assigned tasks as well as the costs they incur while doing so (the latter correspond to cost-to-go functions of relevant optimal control problems). The utilities are thus in general dependent on the state of the agents. In this context, the agents can only perform one task while several agents can be assigned the same task (if this is beneficial to them and their team). In contrast with game-theoretic algorithms which may not always achieve high global utility for the team (inefficient Nash equilibria), our proposed auction-based task allocation mechanism finds task assignments that maximize the global utility of the system. A key advantage of our proposed approach is time efficiency, yet with reasonably high global utility.

We propose a Greedy Coalition Auction Algorithm (GCAA) where the agents negotiate while moving in their state space towards their assigned tasks. When an agent changes his assignment, he needs to recompute the cost estimate and thus his own state-dependent utility. In contrast to game-theoretic solutions (\cite{p:bakolas2020}) which aim for individual rationality but cannot guarantee good team performance, we do not seek a mutually agreeable task assignment but consider instead a broader set of solutions that allows for a higher global utility. Furthermore, in contrast with the CBBA algorithm which clusters and schedules a sequence of tasks for each agent, in this work the problem is composed of multiple agents making a coalition for a specific task that is spatially distributed (which is the only task for that agent). This work hence falls under the category of \textit{Single-Task Robots Multi-Robot Tasks Instantaneous Assignment} (ST-MR-IA) problem, also known as the coalition
formation problem (\cite{p:gerkey2004}).

\textit{Outline:} The rest of the paper is presented as follows. We discuss the problem setup in Section~\ref{s:prelim}. In Section~\ref{s:utility}, we identify the utilities for the tasks, the agents individually, and the team as a whole.
The proposed dynamic auction-based task allocation algorithm and the theoretical analysis on its convergence are presented in Section~\ref{s:RHTA}. In Section~\ref{s:simu}, we present extensive numerical simulations. Finally, concluding remarks and directions for future work are provided in Section~\ref{s:concl}.

\section{Problem Setup}\label{s:prelim}

We assume a multi-agent system (MAS) comprised of $n$ agents. These agents are called active agents when they are far from their target so that they can recompute their best task assignment while moving toward the target, otherwise they are called passive agents when they are too close to the target to consider other targets (they are then permanently assigned to this final task). Let $x_i \in \cS_i \in \Sigma$ and $u_i \in U_i$, for $i \in [1,n]_d$ be the state and input of the $i$-th agent of the MAS at time $t\geq 0$ ($\cS_i$ being his state space and $U_i$ his input space), and $\Sigma \subseteq \mathbb{R}^m$. We also define $\bm{x} \in \cS$ the joint state of the MAS, in which $\bm{x} := (x_1,\dots,x_n)$ and $\cS:= \cS_1 \times \dots \times \cS_n$ (joint state space). Let $\bm{u} \in U$ be the joint input of the MAS, where $\bm{u} := (u_1, \dots, u_n)$ and $U:= U_1 \times \dots \times U_n$ (joint input space). 

The motion of the $i$-th agent is described by
\begin{equation}\label{eq:motion1}
\dot{x}_i = f_i(\bm{x}, \bm{u}),~\quad~x_i(0)=x_i^0,~\quad~i \in [1, n]_d,
\end{equation}
where $x^0_i \in \cS_i$ is the initial state of the $i$-th agent and $f_i: \cS_i \times U_i \rightarrow \cS_i$ is his associated vector field. 
Consequently, $\bm{x}^0 = (x^0_1, \dots, x^0_n) \in \cS$ is denoted as the joint initial state.

In general, task allocation aims to assign individual tasks for $n$ agents and $p$ tasks, $\cT := \{ \cT_1, \dots, \cT_p\}$. 
Let $X_{\cT}$ be the set of states associated with the given tasks, where $X_{\cT} := \{ x_{\cT_1}, \dots, x_{\cT_p} \}$,
and $\cA_i:= \{ a_i^k:~k\in[1,\mathrm{card}( \cA_i )]_d \}$ the set of possible task assignments for the $i$-th agent given a set of tasks $\cT$. While the agents have limited communication between each other, we suppose that they have complete information about all the tasks available.
Each assignment $a_i^k \in \cA_i$ is equal to either a task in $\cT$, that is, $a_i^k = \cT_\ell$ where $\cT_\ell \in \cT$, or the null assignment, that is, $a_i^k = a_\varnothing$.

Additionally, we denote the set of active agents as $\cN_a \subseteq [1,n]_d$ and we fix the assignment $a_i$ of agent $i$ (thus switching his status from active to passive) for all $t > t_p$ if the agent lies inside the boundary of the target, that is, $\Phi_i( x_i(t_p), x_{\cT_j}) < 0$ where $\Phi_i( x_i(t_p), x_{\cT_j})$ is a boundary constraint; for instance $\Phi_i( x_i(t_p), x_{\cT_j}) := \|x_i(t_p) - x_{\cT_j} \| - R_p$ where $R_p$ is the minimum agent-to-target distance to make the task assignment permanent.

\section{Task Utilities}\label{s:utility}

The task utility is characterized by a reward obtained for the completion of the task $\cT_j \in \cT$ and a state-dependent cost to finish this task (for example, the transition cost due to the motion of the agent).

\noindent \textit{Static task utility:} Given an assignment profile $\bm{a} = (a_1, \dots, a_n)$, we denote by $\cT^{-1}_j( \bm{a} )$ the index-set corresponding to the agents assigned to task $\cT_j \in \cT$ under the particular profile. Since a task is not necessarily accomplished when an agent is assigned to it, we let $p_{ij} \in [0,1]$ be the probability of the task $\cT_j$ to be completed successfully by the $i$-th agent. In this case, the probability that the task is successfully completed by at least one agent increases with the number of agents assigned to this task.
The expected reward for completing task $\cT_j$ is defined as~\cite{p:bakolas2020}:
\begin{equation}
r_{\cT_j}(\bm{a}) = \bar{r}_{\cT_j}\left[1-\prod\nolimits_{i \in \mathcal{T}_j^{-1} (\bm{a})} (1-p_{ij})\right], \label{eq}
\end{equation}
where $\bar{r}_{\cT_j}$ is the nominal reward of $\cT_j$. Indeed, the probability that at least one agent completes the task is equal to the complementary of the probability that no agent completes the task, i.e. $\prod\nolimits_{i \in \mathcal{T}_j^{-1} (\bm{a})} (1-p_{ij})$.
It is worth noting that the assignments (and their associated utility) of the passive agents are also taken into account to compute the total reward.

\noindent \textit{State-dependent task completion cost:} The cost to finish the task $\cT_j$ associated with the state $x_{\cT_j}$ at time $t=t_{\f,\cT_j}$ by the $i$-th agent is defined as the optimal cost related to the optimal control problem presented in Problem~\ref{problemOCP}.
\begin{problem}\label{problemOCP}
Let $a_i = \cT_j$, where $\cT_j \in \cT$ and $i \in [1,n]_d$. Furthermore, we denote $x_{\cT_j} \in \cS_i$ as the state linked to $\cT_j$ and $t_{\f,\cT_j}>0$ as the related completion time for $\cT_j$. The goal is to obtain an optimal control input $u_i^{\star}(\cdot): [0,t_\f] \rightarrow U_i$ that is piece-wise continuous and minimizes the functional given by:
\begin{align}\label{eq:cost}
\cJ_i(u_i(\cdot); x_i^0, x_{\cT_j}) 
:= \int_0^{t_\f} \cL_i(x_i(t), u_i(t)) \ud t,
\end{align}
such that the dynamic constraints \eqref{eq:motion1} and the following terminal constraint $\Psi_i( x_i(t_\f), x_{\cT_j} ) = 0$, where $\Psi_i(\cdot; x_{\cT_j})$ is a given $C^1$ function, are respected. Finally, the minimum cost is given by $\rho_i(x_i^0; x_{\cT_j}) := \cJ_i(u_i^{\star}(\cdot);x_i^0, x_{\cT_j})$. 
\end{problem}
\begin{remark}
The terminal constraint function $\Psi_i$ follows $\Psi_i( x_i(t_\f), x_{\cT_j} ) = x_i(t_\f) - x_{\cT_j}$, implying $x_i(t_{\f,\cT_j}) = x_{\cT_j}$. 
Also, a second type of terminal constraint $\Psi_i( x_i(t_\f), x_{\cT_j} ) = \norm{x_i(t_\f) - x_{\cT_j}} - R_{\cT_j}$ considered in this work requires some agents to loiter around the target $\cT_j$ with a certain radius $R_{\cT_j}$
during a loitering time $\tau_{\cT_j} \in [0, t_{\f,\cT_j}]$, in which case these agents start loitering at time $t_{\f,\cT_j} - \tau_{\cT_j}$.
\end{remark}

\noindent \textit{Total Task Utility:} The total completion cost of task $\cT_j$ given the assignment profile $\bm{a} = (a_1, \dots, a_n)$ is given by 
\begin{equation}
\cR_{\cT_j}(\bm{a};\bm{x}^0, x_{\cT_j}) := \sum\nolimits_{i \in \cT^{-1}_j(\bm{a}) } \rho_i(x_i^0; x_{\cT_j}),
\end{equation}
which leads to the definition of the total task utility associated with task $\cT_j$ for a given $\bm{x}_0$
\begin{equation}
\cU_{\cT_j}(\bm{a};\bm{x}^0) := r_{\cT_j}(\bm{a}) - \lambda_{\cT_j} \, \cR_{\cT_j}(\bm{a},\bm{x}^0; x_{\cT_j})
\end{equation}
where $\lambda_{\cT_j}$ is a constant which is used to convert the cost-to-go to the same units as the reward (e.g. from a loss of energy to a loss of money).

\noindent \textit{Individual, Team Utilities:} 
Let us denote the global utility
as
\begin{equation}\label{eq:teamutility}
\cU(\bm{a};\bm{x}^0) := \sum\nolimits_{\cT_j\in \cT} \cU_{\cT_j}(\bm{a};\bm{x}^0).
\end{equation}
The goal is to set this team's utility equal to the sum of each individual utility in order to maximize each individual utility separately. In this regard, based on the task assignment $\bm{a}$, we set the individual utility of agent $i$ equal to his marginal contribution to the global utility $\cU(\bm{a};\bm{x}^0)$:
\begin{align*}
\cU_{i}(\bm{a};\bm{x}^0) & := \cU(\bm{a}; \bm{x}^0) - \cU((a_{\varnothing}, a_{-i}); \bm{x}^0) \numberthis \label{eq:marginaleq1} \\
& = \cU_{\cT_j}(\bm{a}; \bm{x}^0) - \cU_{\cT_j}((a_{\varnothing}, a_{-i}); \bm{x}^0).
\end{align*}

\section{Dynamic Task Allocation}\label{s:RHTA}

\subsection{Problem formulation}
The task allocation is called dynamic since the utilities of the agents change along their path towards their target (state-dependent cost and agents obtain new information by communicating with other agents in the surrounding). In this case, a new assignment profile $\bm{a}^{\star}(t)$ has to be selected at each time step $t \in [0,t_\f]$ as the agents evolve in their state space.

\begin{problem}[DTA: Dynamic Task Allocation]\label{problemRHTA}
Let $t_\f > 0$ and $\bm{x}^0 \in \cS$, find a time-varying task assignment profile $\bm{a}^{\star}(\cdot): [0,t_\f] \rightarrow \cA$ for all the remaining active agents $i \in \cN_a$, that maximizes the global utility in a decentralized way (communication constraints) according to the permanent assignment $a_{i_p}$ of the passive agents $i_p \in \cN_p = [1,n]_d \setminus \cN_a$ and the terminal constraints.
\end{problem}

\subsection{Auction protocols for decentralized task allocation}\label{ss:negotiate}
The main principle of auctions consists in the computation of agents' individual utility for some tasks. Based on these proposed bids, the agents communicate between each other in order to deduce the best allocation for each of them.
A key point is that for their realization, an agent does not have to know the utilities of his teammates (decentralized implementation). The main idea behind the algorithm is to find the best task coalition for the multi-agent network by allocating the tasks to the agents obtaining the highest utility (also called greedy approach).

\subsection{Greedy Coalition Auction Algorithm}

The GCAA is an auction-based algorithm that leverages the simplicity of greedy approaches to provide a solution with fast convergence. The main idea is to iterate between an auction phase and a consensus phase such that it converges to a winning bids list (\cite{p:howauction2009}).

Each agent has three vectors that are constantly updated at each iteration step $t$. The first vector $\bm{z}_i \in [0,p]_d^{n}$ is the list of selected tasks among $\mathcal{T}$, meaning that agent $i$ possesses a vector $\bm{z}_i$ of length $n$ where the $k$-th element of the vector is the expected task assignment of agent $k$ to the best knowledge of agent $i$. The second vector $\bm{y}_i \in \mathbb {R}_{>0}^{n}$ is the list of winning bids (agent's utilities), that is, the $k$-th element of $\bm{y}_i$ is the expected individual utility of agent $k$ by selecting the task $z_{i,k}$ ($k$-th element of the vector of selected tasks $\bm{z}_i$). The third vector $\bm{c}_i \in [0,1]_d^n$ is the list of finalized (or completed) allocations and informs agent $i$ about the status of the allocation for the other agents. In particular, the $k$-th element of $\bm{c}_i$ is set to 1 if the agent $k$ does not plan to change his target anymore, and 0 otherwise. This way, the agents for which the assignment is completed are not taken into account for the auction process in subsequent steps. Based on these three vectors that are first updated, each agent will decide and propose the best assignment for themselves (i.e., maximizing their own utility).
The main algorithm is presented in Main Algorithm and the two associated phases are explained next.

\begin{algorithm}[ht]
\textbf{Main Algorithm:} Greedy Coalition Auction Algorithm\\
\textbf{Input}: $\bm{x}^0$\\
\textbf{Output}: $\bm{z}(t)$
\begin{algorithmic}[1]
\State $t = 0$
\State $\bm{y}(0) = \bm{0}$
\State $\bm{z}(0) = \bm{0}$
\State $\bm{c}(0) = \bm{0}$
\While {$\exists i:\, c_{i,i}(t) = 0 $ \textbf{and} $c_{i,i}(t-1) = 0$ }
  \State SelectBestTask()
  \State ShareStateVectors()
  \State UpdateStateVectors()
  \State $t = t+1$
\EndWhile
\end{algorithmic}
\end{algorithm}

\subsubsection{Auction process:}

The first phase of the algorithm is the auction process.
Here, each agent aims to select his best task based on his own utility.
At lines 2--4 of Algorithm 1, the previous bid vectors are copied into the current ones. If the task selected by one agent $i$ is not finalized (line 5), agent $i$ picks the task $J_i$ that maximizes his expected utility (lines 6--7). Agent $i$ then updates his bid vector with the selected task (line 8) and the associated utility (line 9).

\begin{algorithm}[ht]
\caption{Select the best task for agent $i$}
\textbf{Function} SelectBestTask\\
\textbf{Input}: $\bm{y}(t-1), \bm{z}(t-1), \bm{c}(t-1), \bm{x}^0$ \\
\textbf{Output}: $\bm{y}(t), \bm{z}(t)$
\begin{algorithmic}[1]
\For{$i \in [1,n]_d$}
  \State $\bm{y}_i(t) = \bm{y}_i(t-1)$
  \State $\bm{z}_i(t) = \bm{z}_i(t-1)$
  \State $\bm{c}_i(t) = \bm{c}_i(t-1)$
  \If {$c_{i,i}(t) = 0$}
    \State $\bm{a} = \bm{z}_i(t)$
    \State $J_i = \mathop{{\arg\!\max}}_j \cU_{i}((z_{i,j}(t), a^{\star}_{-i});\bm{x}_i^0)$
    \State $z_{i,i}(t) = J_i$
    \State $y_{i,i}(t) = \cU_{i}(\bm{z}_{i}(t);\bm{x}^0_i)$
  \EndIf
\EndFor
\end{algorithmic}
\end{algorithm}

\subsubsection{Consensus process:}

In Algorithm 2, the consensus process first aims to share the bid vectors $\bm{y}_i$, $\bm{z}_i$, $\bm{c}_i$ with the other agents within the communication range of agent $i$. For each agent $i$, the agents $k$ within the communication range of agent $i$ (satisfying $g_{ik}(t) = 1$ at lines 1--2) send their bid vectors $y_{k,k}(t), z_{k,k}(t)$ and $c_{k,k}(t)$ (lines 3--5).
Then in Algorithm 3, based on his winner bids vector, agent $i$ determines the set of agents $\Tilde{\cA}_i(t)$ allocated to the same selected task (line 2) and extracts the winner based on their utility (line 3). He adds the winner to the list of finalized allocations $\bm{c}_i$ (line 4) and resets the values of the losers in the bids $\bm{y}_i$ and tasks $\bm{z}_i$ (lines 5--8).

\begin{algorithm}[ht]
\caption{Share the bid vectors to agent $i$}
\textbf{Function} ShareStateVectors\\
\textbf{Input}: $\bm{y}(t), \bm{z}(t), \bm{c}(t)$ \\
\textbf{Output}: $\bm{y}(t), \bm{z}(t), \bm{c}(t)$
\begin{algorithmic}[1]
\For{$i \in [1,n]_d$}
  \For{$k \in \{k\,|\,g_{ik}(t) = 1\}$}
    \State $z_{i,k}(t) = z_{k,k}(t)$
    \State $y_{i,k}(t) = y_{k,k}(t)$
    \State $c_{i,k}(t) = f_{k,k}(t)$
  \EndFor
\EndFor
\end{algorithmic}
\end{algorithm}

\begin{algorithm}[ht]
\caption{Update the bid vectors of agent $i$ according to the winners/losers}
\textbf{Function} UpdateStateVectors\\
\textbf{Input}: $\bm{y}(t), \bm{z}(t), \bm{c}(t)$ \\
\textbf{Output}: $\bm{y}(t), \bm{z}(t), \bm{c}(t)$
\begin{algorithmic}[1]
\For{$i \in [1,n]_d$}
  \State $ \Tilde{\cA}_i(t) = \{ k\, | \, z_{i,k}(t) = z_{i,i}(t), f_{i,k}(t) = 0 \}$
  \State $K_i = \mathop{{\arg\!\max}}_{k \in \Tilde{\cA}_i(t)} y_{i,k}(t)$
  \State $c_{i,K_i}(t) = 1$
  \For{$k \in \Tilde{\cA}_i(t) \setminus K_i$}
    \State $z_{i,k}(t) = 0$
    \State $y_{i,k}(t) = 0$
  \EndFor
\EndFor
\end{algorithmic}
\end{algorithm}

Then the time is updated ($t \leftarrow t+1$) and the main algorithm loops to Algorithm 1.
Finally, the algorithm has converged when the finalized choices are validated for some agents ($c_{i,i} = 1$) and the other agents not assigned to a task ($c_{i,i} = 0$) have not changed since the past iteration (meaning that the cost to reach each task is higher than the marginal reward they can obtain).

Once the task allocation is completed, the agents move according to the solution of Problem~\ref{problemOCP} minimizing the cost from the agent to the target. In order to prevent abrupt trajectory changes during the dynamic allocation, we stop the computation of the allocation when the agents are close to their associated target, that is, if $t > t_{\sa,\cT_j}$ where $t_{\sa,\cT_j}$ is the stop time for agent $\cT_j$.

\subsection{Application example}

To illustrate the main steps of the algorithm through a simple example with 2 tasks and 4 agents, Fig.~\ref{fig:task_alloc_ink} shows a task allocation along with their bid vectors. The communication links are shown with red dashed lines and the final task allocation is given with green dashed lines. 

\begin{figure}[h]
\fontsize{5pt}{11pt}
\def\svgwidth{\linewidth}
\centering
\input{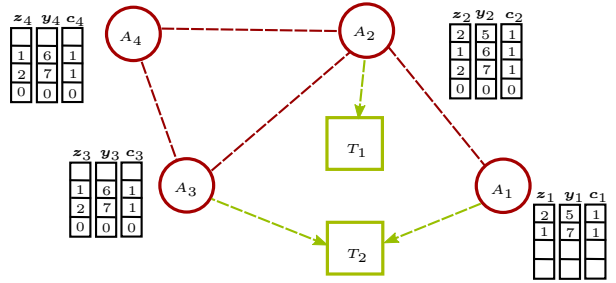}
\caption{Graphical illustration of the auction-based greedy algorithm.}
\label{fig:task_alloc_ink}
\end{figure}

Since agents $A_1$ and $A_3$ cannot communicate directly with each other, they assume that they will obtain the entire reward by completing their selected task $T_2$ while they will actually need to split it. In more details, each agent $i$ fills in his bid vector (associated to his $i$-th row) depending on his best assignment during the first iteration. For example, agent $A_1$ chooses task $T_2$ with a utility of 5 while agent $A_2$ chooses task $T_1$ with a utility of 6. Then, they share their bid vector (i.e., fill in their rows) with their neighbors only, so that $A_1$ does not have information about $A_3$, and reciprocally. Each agent finally updates his bid vector by selecting the task with the highest utility and setting the associated assignment status $\bm{c}$ to 1 (e.g., at the first iteration, all agents finalize the assignment of $A_2$ because he proposes a utility of 7). At the next iterations, the assignment of $A_2$ is no longer computed and the other agents take into account the permanent assignment of $A_2$ for the computation of their own utility (e.g., $A_4$ no longer proposes a bid for $T_2$ because the reduction of the marginal reward associated to the coalition with $A_2$ dropped his marginal utility below zero, it is thus preferable for $A_4$ not to select any task by securing a null utility). At the second iteration, $A_1$ and $A_3$ propose and finalize their assignment for $T_2$ since they think that they are completing $T_2$ individually (no communication between them) and $A_4$ does not propose any assignment. This example thus shows that communication constraints can lead to suboptimal solutions because the actual utility that $A_1$ and $A_3$ will receive by completing $T_2$ is lower than their prediction.

\begin{theorem}
Consider the auction-based task allocation process solved by the GCAA algorithm (Main Algorithm) where the communication range can be limited. Let $n$ be the number of agents, then GCAA converges to an assignment within at most $n$ steps.
\end{theorem}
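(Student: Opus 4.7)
My plan is to construct a strictly increasing potential function, namely the cardinality of the finalization set $F(t) := \{i \in [1,n]_d : c_{i,i}(t) = 1\}$. If I can show that $|F|$ grows by at least one at every iteration of the main while-loop, then since $|F(0)| = 0$ and $|F(t)| \leq n$, the loop must terminate within at most $n$ iterations, which is precisely the theorem.

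I would first verify that $F$ is monotonically non-decreasing in $t$ by inspecting the three sub-routines. SelectBestTask simply copies $\bm{c}_i(t-1)$ into $\bm{c}_i(t)$; ShareStateVectors overwrites $c_{i,k}$ only with the broadcast value $c_{k,k}$ from $k$'s own view, which is itself non-decreasing by induction; and UpdateStateVectors only ever performs the assignment $c_{i,K_i}(t) \leftarrow 1$. Hence, once $c_{i,i}$ reaches $1$ it remains at $1$ for all subsequent iterations, so no previously finalized agent is ever lost from $F$.

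The heart of the argument is the strict-growth step. At an iteration $t$ for which the loop guard still holds, I would pick, among the currently non-finalized agents, an agent $i^\star$ realizing the maximum own-bid $y_{i^\star,i^\star}(t)$ produced by SelectBestTask, breaking ties by smallest index. In Algorithm~3 applied at $i^\star$, the competing set $\Tilde{\cA}_{i^\star}(t)$ always contains $i^\star$ itself. For every other $k \in \Tilde{\cA}_{i^\star}(t)$ within $i^\star$'s current communication neighborhood, ShareStateVectors gives $y_{i^\star,k}(t) = y_{k,k}(t) \leq y_{i^\star,i^\star}(t)$ by the maximality of $i^\star$. For a competitor $k$ outside the neighborhood yet still flagged in $\Tilde{\cA}_{i^\star}(t)$, a secondary invariant, proved by induction on $t$, would show that the cached stale value $y_{i^\star,k}(t)$ cannot exceed $y_{i^\star,i^\star}(t)$ either: either $k$ was eliminated as a loser in a prior iteration and its cached bid was reset to $0$ at lines~5--8 of Algorithm~3, or it was inherited from an earlier iteration in which $k$ was already dominated by the running maximum. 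Consequently $K_{i^\star} = i^\star$, Algorithm~3 sets $c_{i^\star,i^\star}(t) \leftarrow 1$, and $i^\star$ joins $F(t)\setminus F(t-1)$.

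The main obstacle is precisely this handling of stale cached bids under limited communication: without the secondary invariant, an obsolete high bid from a now-inactive or reassigned agent could in principle overshadow $y_{i^\star,i^\star}(t)$ in $i^\star$'s own local view and prevent $i^\star$ from being declared its own winner. Once that invariant is in place, the combination of monotonicity and strict growth of $|F|$ forces the while-loop to exit no later than step $n$, giving the claimed $n$-iteration convergence bound.
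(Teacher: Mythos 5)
Your route diverges from the paper's at the key step, and the divergence is where the gap lies. The paper never argues that the highest bidder wins its own local auction; it simply observes that line 4 of Algorithm 3 unconditionally sets one entry $c_{i,K_i}(t)=1$ for every agent $i$ at every iteration, and that $K_i$ is drawn from $\Tilde{\cA}_i(t)$, which by construction contains only not-yet-finalized indices, so each iteration finalizes a \emph{new} entry of each local vector $\bm{c}_i$. After $n$ iterations all $n$ entries of every $\bm{c}_i$ --- in particular the diagonal one --- equal $1$, and the loop guard fails. That per-row counting never needs to identify \emph{which} agent wins an auction, so it is immune to the stale-bid problem you wrestle with. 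Your version counts only the diagonal set $F(t)$, and therefore must exhibit at least one agent with $K_{i^\star}=i^\star$ at every iteration; that is a genuinely harder claim.

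The concrete gap is that the ``secondary invariant'' carrying that claim is asserted rather than proved, and it is not clearly true. The bids $y_{k,k}(t)$ are marginal utilities that change from iteration to iteration as the assignment profile evolves, so a value $y_{i^\star,k}(t')=y_{k,k}(t')$ cached while $k$ was in communication range can strictly exceed the current maximum own-bid $y_{i^\star,i^\star}(t)$ among non-finalized agents --- either because $k$'s bid has since dropped after $k$ left the neighborhood, or because $i^\star$'s own bid dropped once other finalized assignments reduced its marginal utility. Neither of your two cases (reset to zero as a loser in $i^\star$'s view, or ``already dominated by the running maximum'') covers this: domination at time $t'$ does not imply domination at time $t$ when utilities are state- and profile-dependent. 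In that scenario $K_{i^\star}\neq i^\star$, $|F|$ need not grow, and your induction stalls. Until that invariant is actually established (or you switch to the paper's per-row count, which sidesteps it), the strict-growth step --- and hence the $n$-iteration bound --- is not proved by your argument.
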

\begin{proof}
The proof is derived from the definition of greedy algorithms. In particular at each time iteration $t$ and for all agents $i \in [1,n]_d$, one element (index $K_i$ as presented in Algorithm 3) of $\bm{c}_{i}$ is set to 1 as the task of agent $i$ is set to be finalized. As a consequence at time $t$, there are $t$ elements of $\bm{c}_{i}$ set to 1 and $n-t$ elements still initialized to 0. Hence at time $t=n$, all the elements of $\bm{c}_{i}$ are set to 1 for each agent $i$ which means that the stopping criteria in Main Algorithm ($c_{i,i} = 1$ for all agents $i$) is necessarily verified. The algorithm is thus proven to converge after at most $n$ steps (the number of agents).
\end{proof}

\begin{remark}
This convergence theorem guarantees that the computation time is growing linearly with the number of agents.
\end{remark}

\section{Numerical Simulations}\label{s:simu}

In this section, we present numerical simulations\footnote{Source code available at \url{https://github.com/MartinBraquet/task-allocation-auctions}.} to illustrate the main ideas of the methods proposed so far. We consider a team of agents with double integrator dynamics, that is, $\ddot{p}_i = u_i$, 
with $p_i(0)=p_i^0$ and $\dot{p}_i(0)=v_i^0$, where $p_i \in \mathbb{R}^2$ ($p_i^0 \in \mathbb{R}^2$) and $\dot{p}_i \in \mathbb{R}^2$ ($v_i^0 \in \mathbb{R}^2$) denote, respectively, the position and velocity of the $i$-th agent at time $t$ ($t_0=0$), $i \in [1,n]_d$. The performance index is given by the control effort $
\cJ(u_i(\cdot)) := (1/2) \int_{0}^{t_\f} |u_i(t)|^2 \mathrm{d}t$ and the conversion constant is $\lambda_{\cT_j} = 1$ ($j \in [1,p]_d)$. By setting $x_i := (p_i,~\dot{p}_i) \in \mathbb{R}^4$ and $x_{\cT_j} := (p_{\cT_j}, 0) \in \mathbb{R}^4$, the terminal constraint function is chosen randomly between:
\begin{itemize}
    \item $\Psi_i(x_i(t_{\f,\cT_j}); x_{\cT_j}) := x_i - x_{\cT_j}$,  which means that the $i$-th agent tries to reach the position $p_{\cT_j}$ associated with his assigned task $\cT_j$ at time $t=t_{\f,\cT_j}$ and with terminal velocity $\dot{p}_{\cT_j}$ (randomly selected). 
    \item $\Psi_i(x_i(t_{\f,\cT_j}); x_{\cT_j}) := \|p_i - p_{\cT_j}\| - \tilde{R}_{\cT_j}$,  which means that the $i$-th agent tries to reach the circle (with radius $\tilde{R}_{\cT_j}$) around his assigned task $\cT_j$ at time $t=t_{\f,\cT_j}-\tau_{\cT_j}$ and then loiters around the target until $t_{\f,\cT_j}$. In this work, the best entry point to enter the circle is selected by discretizing the circle in 10 points and selecting the point that minimizes the cost function\footnote{The best solution can also be found by optimal control methods in a systematic / rigorous way and will be considered in further work}. 
\end{itemize}

Both terminal constraints are associated with an optimal control problem with non-zero initial and terminal velocities. It turns out (see, for instance, \cite{p:battin1987}) that the optimal control input is given by 
\begin{align*}
u_i^{\star}(t;t_\f, x_i^0, x_{\cT_j}) & = \frac{4}{t_\f - t} \, \Big[\dot{p}_{\cT_j} - \dot{p}_i(t)\Big] \addtocounter{equation}{1}\tag{\theequation} \\
& + \frac{6}{(t_\f - t)^2} \, \Big[p_{\cT_j} - p_i(t) - \dot{p}_{\cT_j}\,(t_\f - t)\Big]
\end{align*}
which defines a second-order differential equation with time-varying coefficients where $u_i^{\star}(t) = \ddot{p}_i(t)$. It is solved numerically using integration tools (ODE45) in \textsc{Matlab}. 

\begin{figure}[h]
\centering
\begin{subfigure}{\linewidth}
  \centering
  \includegraphics[width=0.8\linewidth]{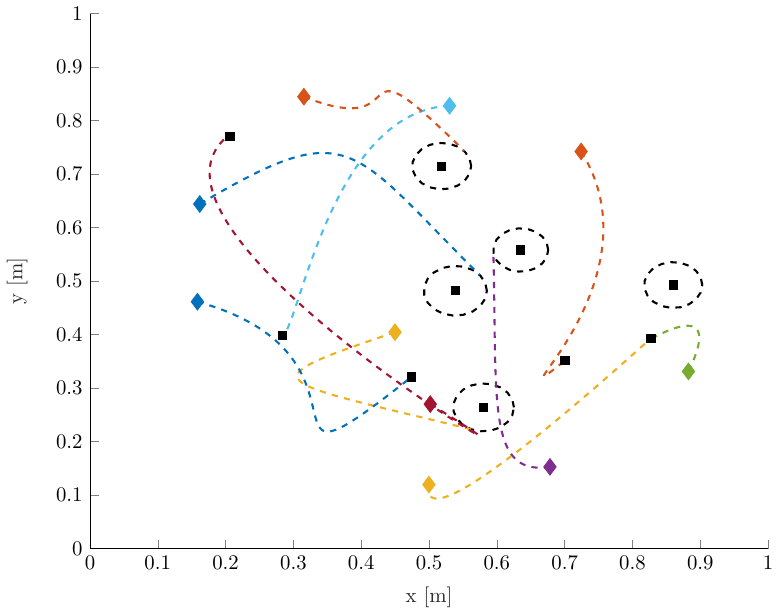}
  \caption{$t=0^+$}
  \label{fig:ExUC_Im}
\end{subfigure}
\begin{subfigure}{\linewidth}
  \centering
  \includegraphics[width=0.8\linewidth]{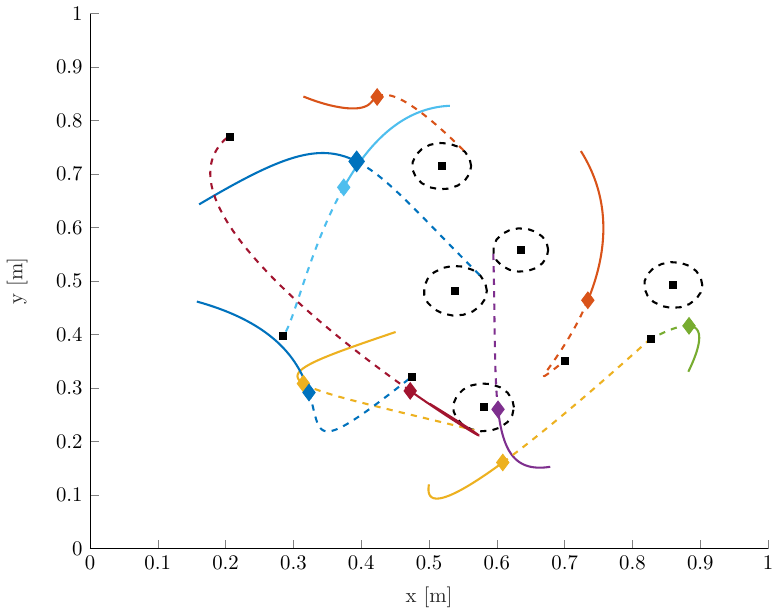}
  \caption{$t=4$}
  \label{fig:ExUC_Im2}
\end{subfigure}
\begin{subfigure}{\linewidth}
  \centering
  \includegraphics[width=0.8\linewidth]{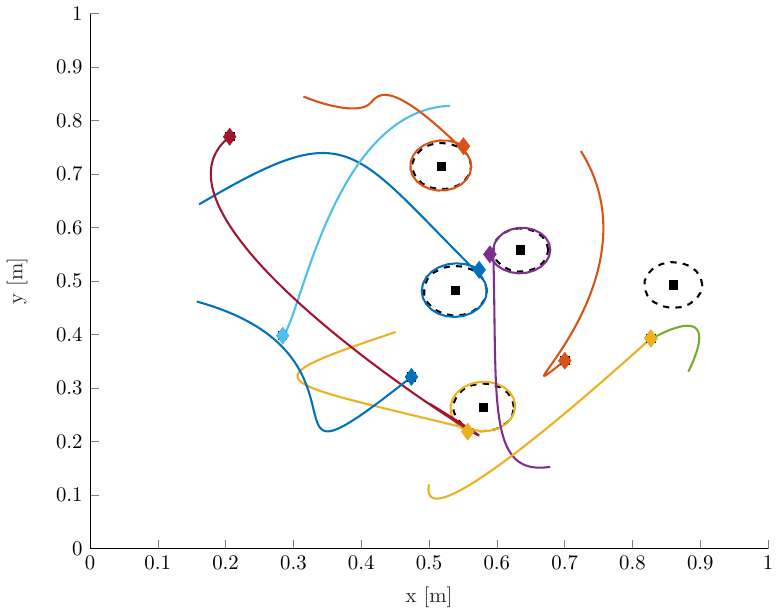}
  \caption{$t=10$}
  \label{fig:ExUC_Im3}
\end{subfigure}
\caption{Dynamic task allocation for the range unconstrained case ($n=p=10$, $\varrho \rightarrow \infty$, $t_\f=10$, $\cU=1.804$).}
\label{fig:RHTAUNC}
\end{figure}

While problems with zero terminal velocities have an analytical solution (\cite{p:bakolas2014}), problems with non-zero terminal velocities require more computation time due to the numerical integration.
The optimal cost-to-go is then obtained via the definition of $\cJ(u_i(\cdot))$.
In addition to this dynamic solution, a drag term (or friction force) $-k_\mathrm{d}\,\dot{p}_{\cT_j}$ proportional to the agent's velocity is used to refine the previous ideal equations of motion. It will thereby slow down to zero velocity an agent when he is not subject to any input control (i.e., he does not have any assigned task) while being negligible when the agent is subject to a typical control input.

\begin{figure}[h]
\centering
\begin{subfigure}{\linewidth}
  \centering
  \includegraphics[width=0.8\linewidth]{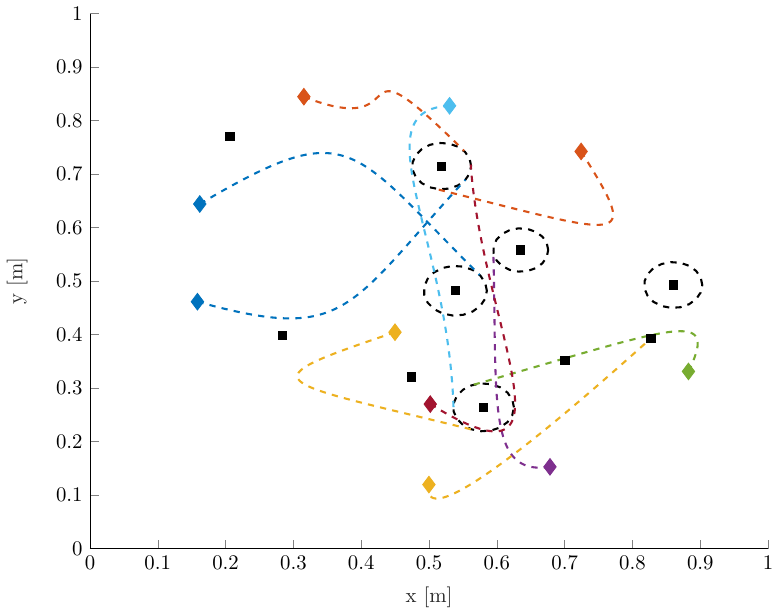}
  \caption{$t=0^+$}
  \label{fig:Ex_Im}
\end{subfigure}
\begin{subfigure}{\linewidth}
  \centering
  \includegraphics[width=0.8\linewidth]{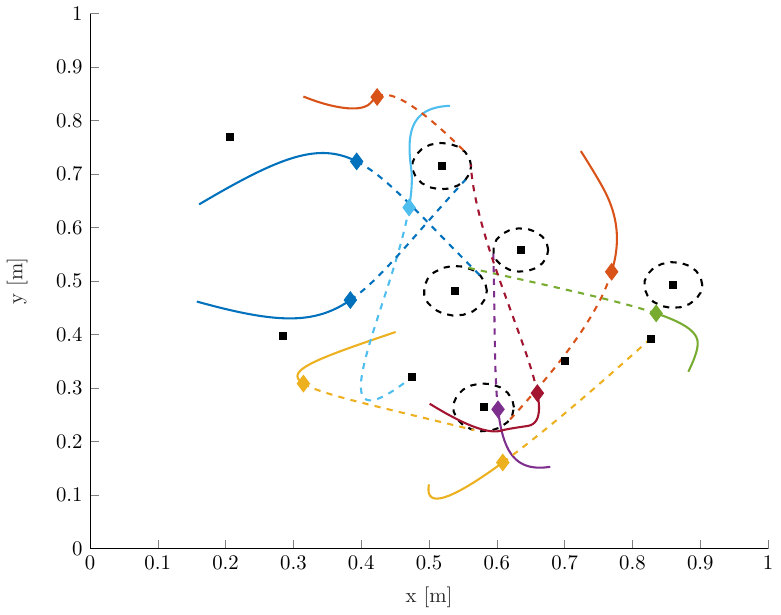}
  \caption{$t=4$}
  \label{fig:Ex_Im2}
\end{subfigure}
\begin{subfigure}{\linewidth}
  \centering
  \includegraphics[width=0.8\linewidth]{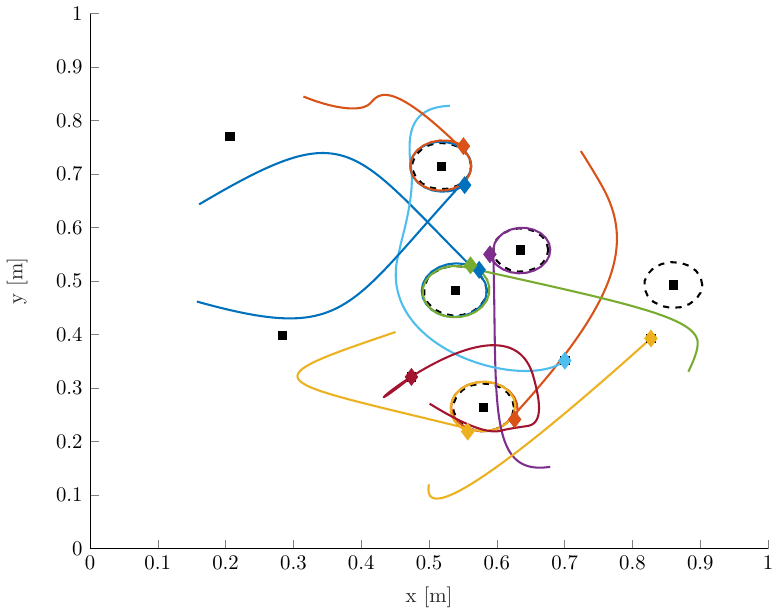}
  \caption{$t=10$}
  \label{fig:Ex_Im3}
\end{subfigure}
\caption{Dynamic task allocation for the range constrained case ($n=p=10$, $\varrho = 0.3$, $t_\f=10$, $\cU=1.515$)}
\label{fig:RHTA}
\end{figure}

A dynamic task allocation is then performed and presented through the dynamic map of the allocation. Fig.~\ref{fig:RHTAUNC} illustrates trajectories of the agents in the absence of communication constraints (or limitations) while the agents in Fig.~\ref{fig:RHTA} can only communicate\footnote{The communication range is not shown in the figure for clarity.} with the other agents within range $\varrho = 0.3$.
We set the simulation time to $t_\f=10$ and discretize it in $k=1000$ iterations which implies a constant time step $\delta t = t_\f/k = 0.01$.
Due to the fact that the computation time required for the numerical integration is substantial, we only consider scenarios with $n=10$ agents and $p=10$ tasks. The completion time $t_{\f,\cT_j}$, which is dependent on the task $\cT_j$, is computed randomly such that $t_{\f,\cT_j}/t_\f \in [0.9,1]$. 5 tasks are fixed targets with non-zero terminal velocities $\dot{p}_{\cT_j} \in [-0.1,0.1]$ (black squares). The other 5 tasks are dynamic, the agents need to loiter at a radius $\tilde{R}_{\cT_j} \in [0.032,0.048]$ and complete one loop at velocity $\dot{p}_{\cT_j} \in [-0.1,0.1]$ for a time $\tau_{\cT_j}$ such that $\tau_{\cT_j}/t_\f \in [0.15,0.25]$ (black dashed circle around a dashed square).

The time $t_{p}$ after which the algorithm preserves the same allocation for an agent is satisfying $\Phi_i( x_i(t_p), x_{\cT_j}) = \|x_i(t_p) - x_{\cT_j} \| - 2 \tilde{R}_{\cT_j} < 0 $ so that the allocation is blocked when the agent enters the circle of radius $2 \tilde{R}_{\cT_j}$ centered in $x_{\cT_j}$ before starting loitering.
The nominal rewards are such that $\bar{r}_{\cT_j} \in [0,0.2]$ for the fixed tasks, they are higher ($\bar{r}_{\cT_j} \in [0,1]$) for the loitering tasks since they typically require more cost to achieve the rotation. The success probability $p_{ij}$ is chosen randomly between 0 and 1.

As seen in Fig.~\ref{fig:RHTAUNC} for the range unconstrained case, the agents (colored diamonds) can freely communicate from start and thereby directly find the best allocation (dashed lines), which is maintained all along the trajectory (plain lines).

Conversely in Fig.~\ref{fig:RHTA} where the range is limited to 0.3, several agents are allocated to the same task because they are not in communication with all the other agents. They thus estimate their utility solely based on the reward of the task while their marginal utility is actually lower. When the agents come closer and enter in communication, the agents start assessing their marginal utility correctly and thus consider other tasks that might increase their own utility. Toward the end of the simulation (Fig.~\ref{fig:Ex_Im3}), the agents' trajectory is subject to sharper changes of direction (e.g. the red and light blue curves) compared to the range unconstrained case (Fig.~\ref{fig:ExUC_Im3}).

\begin{figure}[h]
  \centering
  \includegraphics[width=0.8\linewidth]{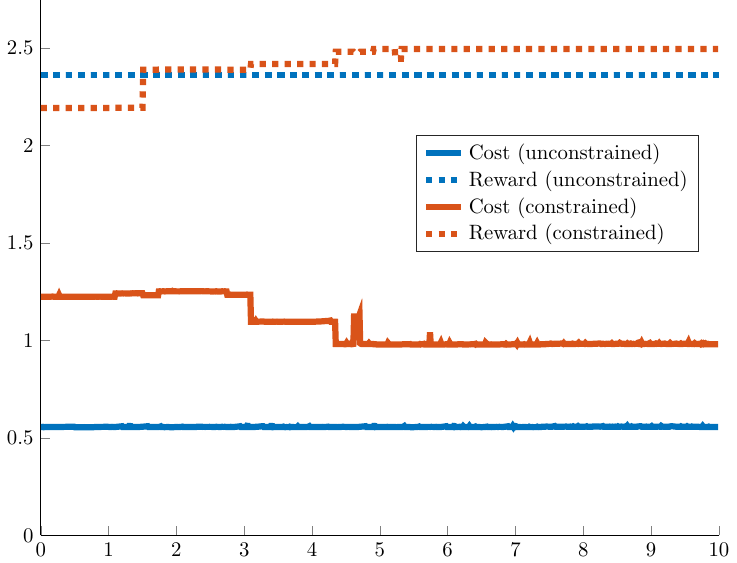}
  \caption{Total cost and reward.}
  \label{fig:CostReward}
\end{figure}

Fig.~\ref{fig:CostReward} and \ref{fig:TotalUtility} quantitatively show the allocation presented above, for which the data from the range unconstrained case (blue lines) are constant over time. When the communication is limited, the total utility increases step-by-step as the agents start communicating with their neighbors (red line in Fig.~\ref{fig:TotalUtility}) but still remains lower than the utility obtained when the communication is not limited. It is worth noting that even though the final reward is higher for the constrained case (dashed red line in Fig.~\ref{fig:CostReward}), the higher cost produced by abrupt trajectory changes makes its final utility lower than the utility generated without communication limitation. Finally, the noisy curves are due to approximation errors in the numerical integration.

\begin{figure}[h]
  \centering
  \includegraphics[width=0.8\linewidth]{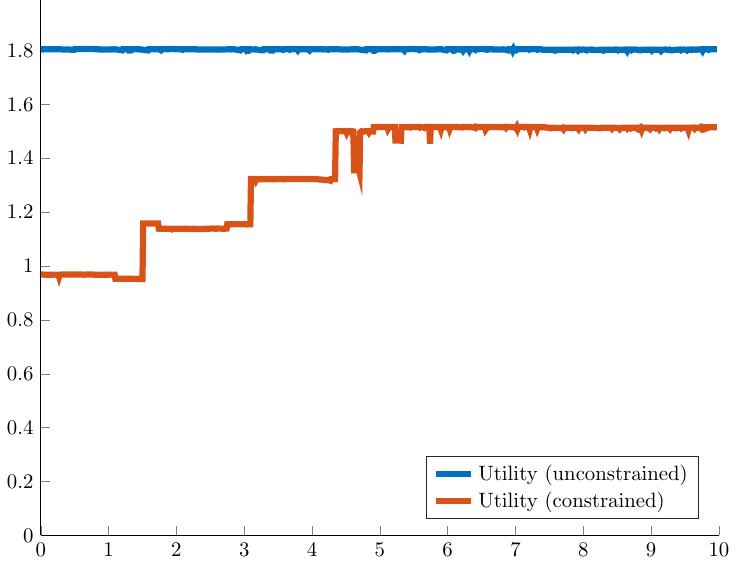}
  \caption{Total utility.}
  \label{fig:TotalUtility}
\end{figure}

The impact of several parameters on the utility and the computation time is then performed (with 10 agents and 10 tasks if not mentioned). In Fig.~\ref{fig:RangeLimitationUtility}, the global utility increases progressively with the communication range. 

\begin{figure}[h]
  \centering
  \includegraphics[width=0.8\linewidth]{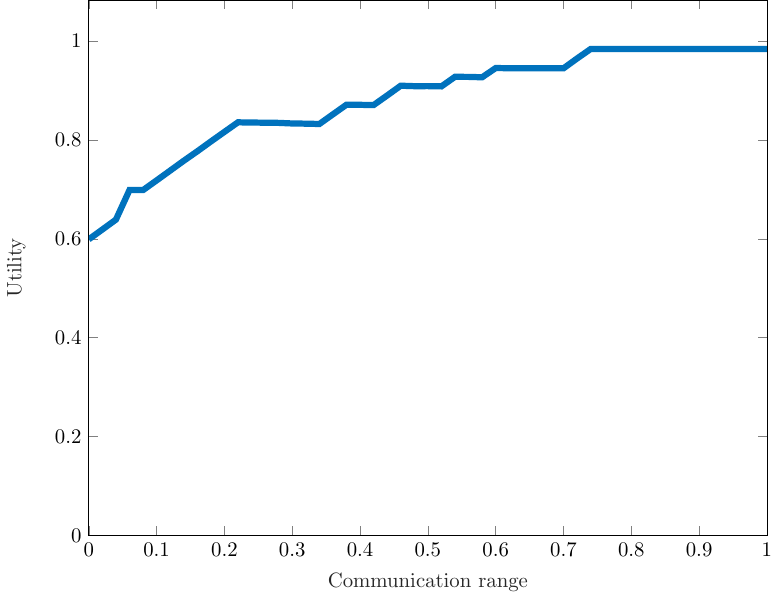}
  \caption{Impact analysis of the communication range on the utility.}
  \label{fig:RangeLimitationUtility}
\end{figure}

\begin{figure}[h]
  \centering
  \includegraphics[width=0.8\linewidth]{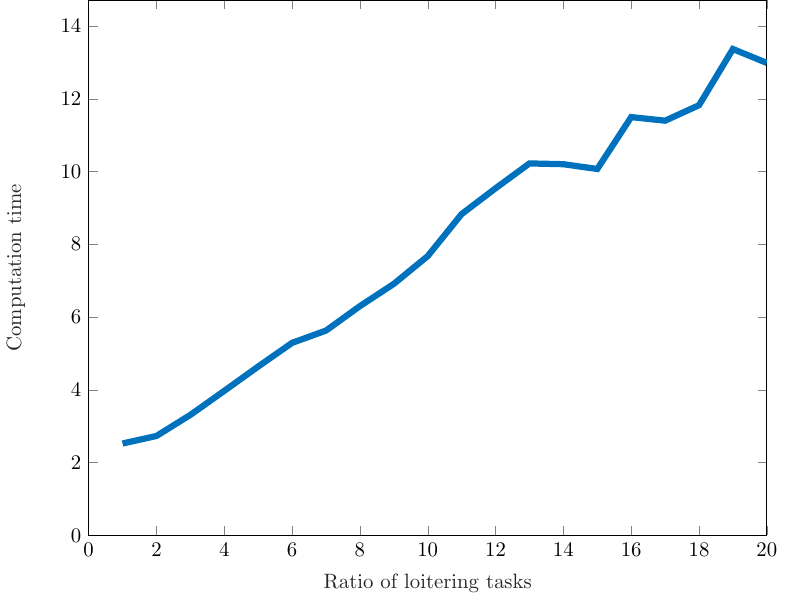}
  \caption{Impact analysis of the ratio of loitering tasks on the computation time.}
  \label{fig:ntLoiterComputationTime}
\end{figure}

In Fig.~\ref{fig:ntLoiterComputationTime}, we perform a simulation with 20 tasks and we progressively replace the tasks at a fixed location with loitering tasks. Since the number of numerical integrations is ten times higher for the latter, the computation time for 20 fixed tasks is approximately ten times higher than the one for 20 loitering tasks.

The effect of the number of agents / tasks on the computational cost is analyzed separately in Fig.~\ref{fig:nTnAUtility}. For a fixed number of agents, the global utility is linear with the number of tasks. It is however interesting to note that for a fixed number of tasks (e.g., 80 at the graph boundary), the utility increases exponentially with the number of agents. This shows that in this framework, multiplying the fleet of agents by $m \in \mathbb{R}$ will result in a global utility lower than $m\,\cU$. This can be illustrated by considering one task $\cT_1$ and $n$ agents with probability $p$ to complete the task, then the utility is given by $\bar{r}_{\cT_1}[1-(1-p)^n]$. 

\begin{figure}[h]
  \centering
  \includegraphics[width=0.7\linewidth]{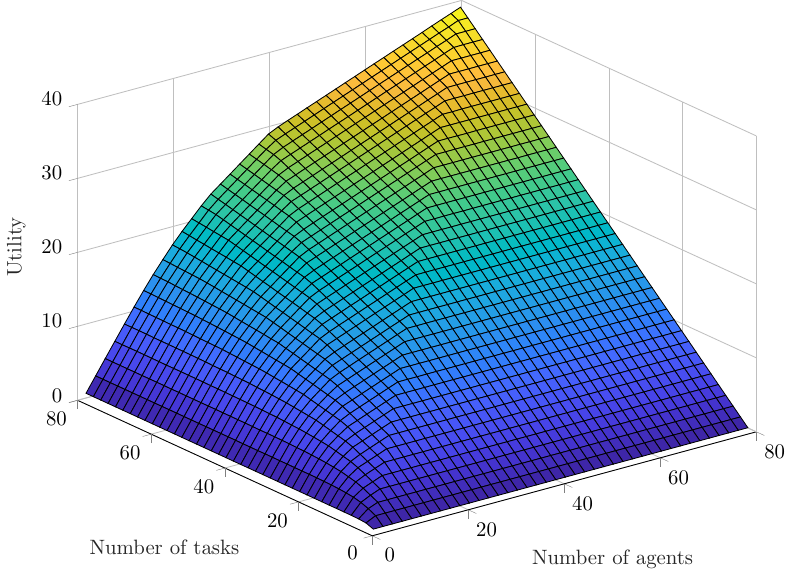}
  \caption{Impact analysis of the number of tasks / agents on the global utility.}
  \label{fig:nTnAUtility}
\end{figure}

Furthermore in Fig.~\ref{fig:nTnAComputationTime}, the computation time is more dependent on the number of tasks than the number of agents (the allocation with 1 task and 50 agents is straightforward while the allocation with 50 tasks and 1 agent requires the agent to iterate over all tasks). As a consequence, problems with a large number of agents are more tractable than problems with a high number of tasks. To mitigate this issue, one could for example restrict the considered tasks to the tasks near the agents (but at the cost of the utility, revealing a trade-off between the computation and the optimal allocation with the maximum utility).

\begin{figure}[h]
  \centering
  \includegraphics[width=0.7\linewidth]{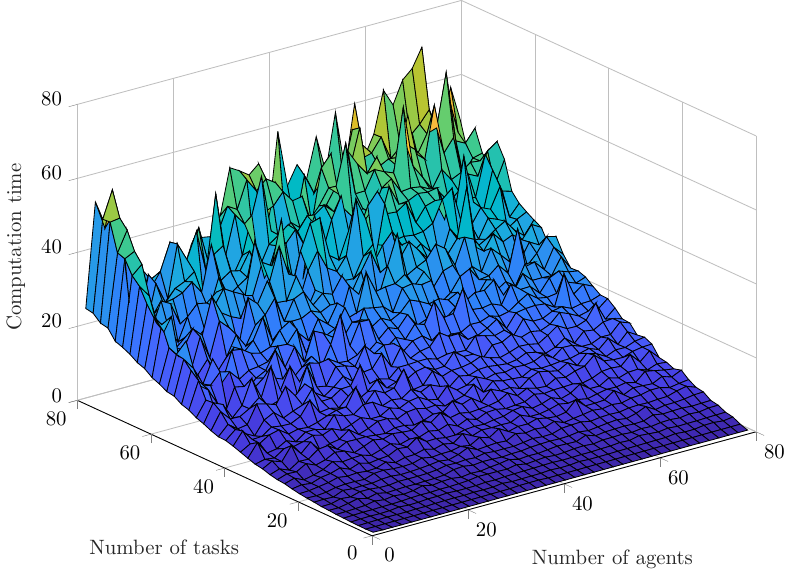}
  \caption{Impact analysis of the number of tasks / agents on the computation time.}
  \label{fig:nTnAComputationTime}
\end{figure}

\section{Conclusion}\label{s:concl}

In this paper, we have presented an auction-based framework to address dynamic task allocation problems for multi-agent systems
with state-dependent utilities and various task characteristics (such  as terminal constraints, completion time, etc.). Our greedy approach offers a practical, yet efficient, solution to a class of more realistic and challenging dynamic task allocation problems for autonomous mobile agents.

For large fleets of autonomous systems, some scalability issues arise due to the computation time. Our next research will focus more on machine learning and its recent achievements (specifically deep reinforcement learning for multi-agent robots) to mitigate this issue and build upon the algorithm presented in this work. We further plan to extend the results presented herein to even more realistic task allocation problems with deadlines, logical constraints, pop-up tasks and agents with varying capabilities and preferences. Finally, the control problem will be extended to include obstacles (e.g., cluttered environments).

\bibliography{ref}

\end{document}